\newtheorem{theorem}{Theorem}
\newtheorem{lemma}{Lemma}
\newtheorem{corollary}{Corollary}
\DeclareMathOperator*{\suchthat}{s.t.}
\title{Rate-Distortion-Perception Function of Bernoulli Vector Sources}
 \author{\IEEEauthorblockN{Praneeth Kumar Vippathalla, Mihai-Alin Badiu and Justin P. Coon}
 \IEEEauthorblockA{Department of Engineering Science\\ University of Oxford\\ OX1 3PJ Oxford, U.K. \\Email: \{praneeth.vippathalla,  mihai.badiu, justin.coon\}@eng.ox.ac.uk}
\thanks{This research was funded in whole or in part by the U. S. Army Research Laboratory and the U. S. Army Research Office (W911NF-22-1-0070 and W911NF-24-2-0102). For the purpose of Open Access, the authors have applied a CC BY public copyright licence to any Author Accepted Manuscript (AAM) version arising from this submission.}}
\begin{document}
\IEEEoverridecommandlockouts
\maketitle

\begin{abstract}
  In this paper, we consider the rate-distortion-perception (RDP) trade-off for the lossy compression of a Bernoulli vector source, which is a finite collection of independent binary random variables. The RDP function quantifies in a way the efficient compression of a source when we impose a distortion constraint that limits the dissimilarity between the source and the reconstruction and a perception constraint that restricts the distributional discrepancy of the source and the reconstruction. In this work, we obtain an exact characterization of the RDP function of a Bernoulli vector source with the Hamming distortion function and a single-letter perception function that measures the closeness of the distributions of the components of the source. The solution can be described by partitioning the set of distortion and perception levels $(D,P)$ into three regions, where in each region the optimal distortion and perception levels we allot to the components have a similar nature. Finally, we introduce the RDP function for graph sources and apply our result to the Erd\H{o}s-R\'enyi graph model.
\end{abstract} 

\section{Introduction}\label{sec:introduction}

The rate-distortion-perception (RDP) function, introduced by Blau and Michaeli in \cite{blau19}, is used to study the lossy compression with an additional constraint on the perceptual quality of the reconstruction. In image compression, sometimes though a reconstructed image is well within a desired mean squared error limit from the original image, the visual quality of the reconstruction could be quite poor. In order to have a better reconstruction, a constraint on a measure of perception has been used to study the lossy compression, extending the classical rate-distortion framework. This measure is typically defined as a distance between probability distributions.

Let $\mathcal{X}$ be the (finite) alphabet of a source and its reconstructed version. Given a distortion function $\Delta:\mathcal{X} \times \mathcal{X} \to [0, \infty)$ and a perception function $d:\mathcal{P}({\mathcal{X}})\times \mathcal{P}({\mathcal{X}}) \to [0, \infty)$ with $\mathcal{P}({\mathcal{X}})$ being the set of all probability distributions defined on $\mathcal{X}$, the rate-distortion-perception function of a random source $X \sim P_{X} \in \mathcal{P}({\mathcal{X}}) $ at a distortion level $D$ and a perception level $P$  is defined as
\begin{align}\label{eq:rdp:def}
    R(D,P)= &\min_{P_{\hat{X}|X}} I(X;\hat{X}) \\
            &\suchthat \  \mathbb{E}\left[\Delta(X;\hat{X})\right]\leq D, \ d(P_X, P_{\hat{X}})\leq P.\nonumber
\end{align}
The perception function $d$ measures how far the distributions $P_X$ and $P_{\hat{X}}$ are in some sense. 

Coding arguments also exist for the RDP function \cite{RDP_function_chen, theis2021a}. In particular, Theis and Wagner \cite{theis2021a} gave a coding argument for a one-shot setting using a stochastic encoder and decoder with the variable-length coding. Let $U \in \mathbb{R}$ be the randomness (independent of $X$) shared by the encoder and decoder. The encoder outputs a finite length binary string $K \in \{0,1\}^*$ with the prefix property, using inputs $X$ and $U$. The decoder outputs $\hat{X} \in \hat{\mathcal{X}}$ using $K$ and $U$ so that $\hat{X}$ is a reconstructed version of $X$ satisfying the distortion and perception constraints $\mathbb{E}\left[\Delta(X;\hat{X})\right]\leq D$ and $\ d(P_X, P_{\hat{X}})\leq P$. The goal is to minimize the average length of $K$ over all encoder and decoder pairs, and the shared randomness $U$, i.e.,
$L\triangleq\min\ \mathbb{E}[l(K)],$
where the minimization is over all the encoder and decoder pairs and all the random variables $U$, with $l(\cdot)$ denoting the length function. The optimal average length and the RDP function are related as follows:
  \begin{align}\label{eq:conv:theis}
        &R(D,P)\leq L \leq R(D,P) + \log (R(D,P) + 1) + 5
    \end{align}
The converse readily follows from conditioning on a realization of the private randomness and appropriately bounding the entropies. For the achievability part, \cite{theis2021a} used the strong functional representation lemma of \cite{cheuk18} to show the existence of a shared random variable $U$, an encoder and a decoder such that the average length of the encoder's output $K$ is less than $R(D,P) + \log (R(D,P) + 1) + 5$.

Recently, there have been several efforts to characterize the RDP tradeoff for different sources and models \cite{RDP_vec_gaussian,computation_RDP_gaussian, dror_24, yassine_23}. For instance, \cite{RDP_vec_gaussian} studies the RDP tradeoff for a Gaussian vector source under different perception measures. In this work, we consider a Bernoulli vector source and exactly characterize its RDP function for the Hamming distortion function and a perception function that measures the closeness of the distributions of the components of the source. Finally, we introduce the RDP function for graph sources and apply the obtained characterization to the case of Erd\H{o}s-R\'enyi graphs.

\section{RDP Function of a Bernoulli Vector Source}

A Bernoulli vector source is a collection $\mathbf{X}=\{X_i:i \in [n]\}$ of independent binary random variables with respective success probabilities $\{q_i:i \in [n]\}$. We are interested in finding the RDP function \eqref{eq:rdp:def} of a Bernoulli vector source with the Hamming distortion measure $\Delta(\mathbf{X};\hat{\mathbf{X}})= \sum_{i}\mathds{1}(X_i \neq \hat{X}_i)$ and the perception measure $d(P_{\mathbf{X}},P_{\hat{\mathbf{X}}})= \sum_{i}|\mathbb{P}(X_{i}=1)-\mathbb{P}(\hat{X}_{i}=1)|$. For simplicity, we are considering a single-letter perception measure that quantifies how far the respective marginal distributions of the components of the source and the reconstructed version are. 

Without loss of generality, we can consider the case of $0\leq q_i \leq 1/2$ for all $i \in [n]$. If $q_j \geq 1/2$, we can work with $X_j \oplus 1$ instead of $X_j$ to reduce it to the case of $1-q_j$, which is less than $1/2$. We also assume without loss of generality that $q_1\geq q_2 \geq \ldots \geq q_n$. When $i=1$, the RDP function is just that of a Bernoulli source $X\sim \text{Bern}(q)$, $q\leq 1/2$, which was computed in \cite{blau19}. For $P \leq q$, it is given by
\begin{align}\label{rdpber1}
    &R(D,P,q)\nonumber\\&= \begin{cases}
        h_2(q) - h_2(D) \qquad \text{if} \ 0\leq D < \frac{P}{1-2(q-P)}, \\
        0 \mkern 134mu \text{if} \ D \geq 2q(1-q)-(1-2q)P, \\
        2h_2(q) + h_2(q-P) - h_3\left(\frac{D-P}{2}, q\right)  - h_3\left(\frac{D+P}{2}, 1-q\right) \\
        \mkern 140mu \text{otherwise},
    \end{cases}
\end{align}
and for $P \geq q$, the function is given by
\begin{align}\label{rdpber2}
    R(D,P,q)= \begin{cases}
        h_2(q) - h_2(D) & \text{if} \ 0\leq D < q, \\
        0 & \text{otherwise},
    \end{cases}
\end{align}
where $h_2(u):= -u \log u - (1-u) \log (1-u)$ is the binary entropy function of the probability vector $(u,1-u)$, and $h_3(u, v):= -u \log u - v \log v -(1-u-v) \log (1-u-v)$ is the ternary entropy function of the probability vector $(u,v, 1-u-v)$. The function\footnote{Note the usage of the parameter $q$ in the notation of the RDP function of a Bernoulli source to distinguish it from a vector source.} $R(D,P,q)$ is convex and non-increasing in $(D,P)$.                       

\subsection{Main Results}

The following theorem expresses the RDP function of a Bernoulli vector source in terms of its components.
\begin{theorem}\label{thm:ber:vec}
For a Bernoulli vector source with parameters $q_i \leq 1/2$, $i \in [n]$, the rate-distortion-perception function is given by 
    \begin{align}\label{eq:rdp_qi}
    &R(D,P)= \min\bigg\{\sum_{i}R(d_{i}, p_{i}, q_{i}): 0\leq d_i,\ 0 \leq  p_i, \nonumber\\
    &\mkern 200mu \sum_{i}d_{i} \leq D,\ 
      \sum_{i} p_{i}\leq P\bigg\}
    \end{align}
    where $R(d,p,q)$ is the RDP function of a Bernoulli source with probability $q$, which is as defined in \eqref{rdpber1} and \eqref{rdpber2}.
\end{theorem}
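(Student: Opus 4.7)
The plan is to prove the two matching inequalities, so that the minimum over allocations $(d_i,p_i)$ is simultaneously an achievable upper bound and a valid lower bound on the vector RDP function. The key structural facts I will exploit are that $\Delta(\mathbf{X};\hat{\mathbf{X}}) = \sum_{i}\mathds{1}(X_i\neq \hat{X}_i)$ and $d(P_{\mathbf{X}},P_{\hat{\mathbf{X}}}) = \sum_i |\mathbb{P}(X_i=1)-\mathbb{P}(\hat{X}_i=1)|$ are both additive over components, and that the source components $X_i$ are independent.

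For the achievability direction, I would fix any allocation $(d_i,p_i)_{i\in[n]}$ with $d_i,p_i\geq 0$, $\sum_i d_i\leq D$, $\sum_i p_i \leq P$, and for each $i$ choose a scalar channel $P_{\hat{X}_i\mid X_i}$ that attains $R(d_i,p_i,q_i)$ in the Bernoulli scalar problem. Taking the product channel $P_{\hat{\mathbf{X}}\mid \mathbf{X}} = \prod_i P_{\hat{X}_i\mid X_i}$, the joint law $P_{\mathbf{X},\hat{\mathbf{X}}}$ factorizes, so the expected Hamming distortion is $\sum_i d_i\leq D$, the perception is $\sum_i p_i\leq P$, and $I(\mathbf{X};\hat{\mathbf{X}}) = \sum_i I(X_i;\hat{X}_i) = \sum_i R(d_i,p_i,q_i)$. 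Minimizing over admissible allocations yields the ``$\leq$'' direction of \eqref{eq:rdp_qi}.

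For the converse, I would start from an arbitrary $P_{\hat{\mathbf{X}}\mid \mathbf{X}}$ that satisfies the vector distortion and perception constraints, and set $d_i^\star = \mathbb{E}[\mathds{1}(X_i\neq \hat{X}_i)]$ and $p_i^\star = |\mathbb{P}(X_i=1)-\mathbb{P}(\hat{X}_i=1)|$. By the additivity noted above, $(d_i^\star,p_i^\star)$ is a feasible allocation for the right-hand side of \eqref{eq:rdp_qi}. The standard single-letterization via the chain rule and ``conditioning reduces entropy'' then gives
\begin{align*}
I(\mathbf{X};\hat{\mathbf{X}}) &= \sum_i H(X_i) - H(\mathbf{X}\mid \hat{\mathbf{X}}) \\
&= \sum_i H(X_i) - \sum_i H(X_i\mid X^{i-1},\hat{\mathbf{X}}) \\
&\geq \sum_i H(X_i) - \sum_i H(X_i\mid \hat{X}_i) = \sum_i I(X_i;\hat{X}_i),
\end{align*}
where the first equality uses the independence of the $X_i$'s. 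Since the induced marginal channel $P_{\hat{X}_i\mid X_i}$ is feasible for the scalar problem at level $(d_i^\star,p_i^\star)$, its mutual information is at least $R(d_i^\star,p_i^\star,q_i)$ by definition. This shows $I(\mathbf{X};\hat{\mathbf{X}}) \geq \sum_i R(d_i^\star,p_i^\star,q_i)$, which is bounded below by the minimum in \eqref{eq:rdp_qi}.

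I do not expect a genuine obstacle in this argument; it is a clean separability statement that follows once one combines (i) additivity of the Hamming and marginal-perception measures and (ii) independence of the source components in the chain rule. The real technical work of the paper lies downstream, in actually evaluating the resulting convex allocation problem in closed form (the three-region partition of $(D,P)$ alluded to in the abstract), which uses the explicit expressions \eqref{rdpber1}--\eqref{rdpber2} and the convexity/monotonicity of the scalar $R(\cdot,\cdot,q)$, rather than in establishing \Cref{thm:ber:vec} itself.
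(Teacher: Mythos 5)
Your proof is correct and takes essentially the same route as the paper: single-letterize via the chain rule and conditioning-reduces-entropy for the converse, then use a product channel built from scalar-optimal channels for achievability. If anything, your converse writes the chain rule a bit more carefully than the paper's displayed step (which conditions only on $\{\hat X_j:j<i\}$ where it should condition on $X^{i-1}$ and all of $\hat{\mathbf X}$), but both lead to the same inequality.
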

\begin{proof}
For a conditional distribution $P_{\hat{\mathbf{X}}|\mathbf{X}}$ satisfying the constraints $\sum_i \mathbb{P}(X_{i}\neq \hat{X}_{i}) \leq D$ and $\sum_{i}|\mathbb{P}(X_{i}=1)-\mathbb{P}(\hat{X}_{i}=1)|\leq P$, we have
    \begin{align}
     &I(\mathbf{X}; \hat{\mathbf{X}})\nonumber\\
     &= H(\mathbf{X}) -H(\mathbf{X}| \hat{\mathbf{X}})\nonumber\\
     & = \sum_{i} H\left(X_i\right) - \sum_{i} H(X_i\big| \{\hat{X}_j:j<i\}) \nonumber \\
     & \geq \sum_{i} H\left(X_i\right) - \sum_{i} H(X_i\big| \hat{X}_i) \nonumber \\
     & = \sum_{i} I(X_i;\hat{X}_i)\nonumber\\
     & \geq \sum_{i}R(d_{i}, p_{i}, q_{i}) \label{eq:di_pi_qi} \\
     & \geq \min\bigg\{\sum_{i}R(d_{i}, p_{i}, q_{i}): 0\leq d_i,\ 0 \leq  p_i,\ \sum_{i}d_{i} \leq D,\nonumber\\
    &\mkern 300mu \  \sum_{i} p_{i}\leq P\bigg\}\label{eq:mon2} 
    \end{align}
where \eqref{eq:di_pi_qi} uses the definition of the RDP function of a Bernoulli source with the notation that $d_{i}:=\mathbb{P}(X_i\neq \hat{X}_i)$ and $p_{i}:=|\mathbb{P}(X_{i}=1)-\mathbb{P}(\hat{X}_{i}=1)|$, which satisfy the distortion and perception constraints yielding \eqref{eq:mon2}.

With $d^*_{i}$'s and $p^*_{i}$'s being the optimizers in \eqref{eq:mon2}, the lower bound can be achieved by choosing a distribution of the form
\begin{align}  \label{eq:ind_form} P_{\hat{\mathbf{X}}|\mathbf{X}}&= \prod_{i}P_{\hat{X}_i|X_i},
 \end{align}
 where the conditional distributions $P_{\hat{X}_i|X_i}$ achieve the rates $R(d^*_{i}, p^*_{i}, q_{i})$ of a Bernoulli source. 
\end{proof}
The arguments used in the above theorem work in general for single-letter distortion and perception measures.

\begin{corollary}
If $q_i=q$, then
    $$R(D,P)=n R\left(D/n, P/n, q\right),$$
    where $R(D,P,q)$ is the RDP function of a Bernoulli random var with probability $q$, which is as defined in \eqref{rdpber1} and \eqref{rdpber2}.
\end{corollary}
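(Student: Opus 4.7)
The plan is to deduce the corollary directly from Theorem~\ref{thm:ber:vec} by exploiting two structural properties of the single-letter RDP function $R(d,p,q)$: its convexity and its monotonicity in $(d,p)$, both of which are stated in the paragraph following \eqref{rdpber2}.

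First, I would invoke Theorem~\ref{thm:ber:vec} in the specialized form where all $q_i$ equal $q$, so that
\begin{align*}
R(D,P) = \min\Bigl\{\textstyle\sum_i R(d_i,p_i,q):\ d_i,p_i\geq 0,\ \sum_i d_i\leq D,\ \sum_i p_i\leq P\Bigr\}.
\end{align*}
The upper bound $R(D,P)\leq n R(D/n,P/n,q)$ is obtained by taking the feasible point $d_i=D/n$, $p_i=P/n$ for all $i$, which trivially satisfies the sum constraints and yields the value $n R(D/n,P/n,q)$ in the objective.

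For the matching lower bound, I would apply Jensen's inequality to the convex function $R(\cdot,\cdot,q)$: for any feasible $(d_i,p_i)$,
\begin{align*}
\frac{1}{n}\sum_i R(d_i,p_i,q) \;\geq\; R\!\left(\tfrac{1}{n}\sum_i d_i,\ \tfrac{1}{n}\sum_i p_i,\ q\right).
\end{align*}
Since $\frac{1}{n}\sum_i d_i \leq D/n$ and $\frac{1}{n}\sum_i p_i \leq P/n$, the monotonicity (non-increasingness) of $R(\cdot,\cdot,q)$ gives $R(\tfrac{1}{n}\sum_i d_i,\tfrac{1}{n}\sum_i p_i,q)\geq R(D/n,P/n,q)$, and multiplying through by $n$ closes the argument.

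I do not anticipate a real obstacle here; the only subtlety is ensuring that both the convexity and monotonicity properties of $R(d,p,q)$ are being used, and that the monotonicity step is applied \emph{after} Jensen so that the averaged arguments are still bounded by $D/n$ and $P/n$. Everything else is routine.
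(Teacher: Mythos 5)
Your proof is correct and matches the paper's own argument: both apply Jensen's inequality to the convex function $R(\cdot,\cdot,q)$ and then invoke monotonicity to pass from the averaged constraints to $(D/n,P/n)$. The only difference is that you make the trivial upper-bound direction explicit, while the paper leaves it implicit since $(D/n,P/n)$ is clearly feasible in the minimization of Theorem~\ref{thm:ber:vec}.
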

\begin{proof}
The result immediately follows from Theorem~\ref{thm:ber:vec} by using the convexity and monotonicity of the function $R(d,p,q)$ in arguments $d$ and $p$:
    \begin{align*}
    \sum_{i}R(d_{i}, p_{i}, q)& \geq n R\bigg({\sum_{i}d_{i}}/{n}, {\sum_{i}p_{i}}/{n}, q\bigg),\\
     & \geq n R\left({D}/{n}, {P}/{n}, q\right).
    \end{align*}
\end{proof}
We will now solve the optimization of Theorem~\ref{thm:ber:vec} by considering a partition of $\mathbb{R}^2_{+}:=\{(D,P): D\geq 0, P\geq 0\}$. For this purpose, let us define two functions in $D$:
\begin{align}
    T(D) \triangleq\sum_{i} \frac{2d_{i}(1-q_{i})}{1-2d_{i}},
\end{align}
for $ 0\leq D < \sum_i q_{i}$,
where $d_{i}=\min\left\{\frac{1}{e^{\beta}+1}, q_{i}\right\}$ with $\beta\geq 0$ chosen such that $\sum_{i}d_{i}=D$, and
\begin{align}
    S(D) \triangleq &\min \sum_i p_{i}\\ &\suchthat \ q_{i}\leq d_{i},\  0\leq p_{i},\ \sum_{i}d_{i}= D,\nonumber\\
    & \mkern 40mu 2q_{i}(1-q_{i})-(1-2q_{i})p_{i}\leq d_{i}. \nonumber
\end{align}
for $ \sum_i q_{i}\leq D$. The next lemma, which is stated without proof, characterizes $S(D)$ and can readily be proved by solving the corresponding KKT conditions.
\begin{lemma}\label{lem:partition_D_P}
For $ \sum_iq_{i}\leq D \leq \sum_i2q_{i}(1-q_{i})$, we have
\begin{align}
    S(D)= \frac{2q_k(1-q_k)-d_k}{1-2q_k}+\sum_{i=k+1}^{n}q_i,
\end{align}
 where $k$ and $d_k$ are chosen such that $\sum_{i=1}^{k-1}2q_i(1-q_i) +d_k +\sum_{i=k+1}^{n}q_i=D$. Here the optimizers are
 \begin{align}
     (d_i^*,p_i^*)=\begin{cases}
         \left(2q_i(1-q_i), 0\right) & \text{if } i \leq k-1\\
         \left(d_k, \frac{2q_k(1-q_k)-d_k}{1-2q_k}\right) & \text{if } i = k\\
         \left(q_i,q_i\right)& \text{if } k+1 \leq i \leq n.\\
     \end{cases}
 \end{align}
For $\sum_i2q_{i}(1-q_{i})\leq D,$
\begin{align}
    S(D)=0  
\end{align}
where the optimal value is achieved by pairs that satisfy $p_i^*=0$, $d_i^*\geq 2q_{i}(1-q_{i})$ and $\sum_{i}d_i^*=D$.
\end{lemma}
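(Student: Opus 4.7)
The plan is to reduce the optimization to a one-dimensional convex program by first eliminating the perception variables. For any fixed $d_i \geq q_i$, I would observe that the smallest $p_i \geq 0$ consistent with $2q_i(1-q_i)-(1-2q_i)p_i \leq d_i$ is
\[
p_i^*(d_i) \;=\; \max\!\left\{0,\; \frac{2q_i(1-q_i) - d_i}{1-2q_i}\right\},
\]
which is convex, piecewise-linear and non-increasing: it equals $q_i$ at $d_i = q_i$, decreases linearly with slope $-1/(1-2q_i)$, and hits $0$ at $d_i = 2q_i(1-q_i)$. Substituting collapses the problem to the separable convex program $\min_{d} \sum_i p_i^*(d_i)$ subject to $d_i \geq q_i$ and $\sum_i d_i = D$.

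Next I would extract the greedy structure from the KKT conditions. Introducing a multiplier $\lambda$ for the budget equality, the Lagrangian decouples into scalar subproblems $\min_{d_i \geq q_i}\{p_i^*(d_i) + \lambda d_i\}$. A brief case analysis across the two kinks of $p_i^*$ shows that for $\lambda > 0$ the minimizer is $d_i = q_i$ when $\lambda > 1/(1-2q_i)$, $d_i = 2q_i(1-q_i)$ when $\lambda < 1/(1-2q_i)$, and $d_i \in [q_i, 2q_i(1-q_i)]$ when $\lambda = 1/(1-2q_i)$, while $\lambda = 0$ permits any $d_i \geq 2q_i(1-q_i)$. Because $q_1 \geq \cdots \geq q_n$ forces $1/(1-2q_1) \geq \cdots \geq 1/(1-2q_n)$, a single $\lambda$ sorts the indices into a saturated prefix (large $q_i$ with $d_i^* = 2q_i(1-q_i)$, hence $p_i^* = 0$), one transitional index $k$, and a minimal suffix (small $q_i$ with $d_i^* = q_i$, hence $p_i^* = q_i$).

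To close the argument I would tune $\lambda$ to satisfy the budget, which uniquely determines $k$ and $d_k^* \in [q_k, 2q_k(1-q_k)]$ through $\sum_{i=1}^{k-1}2q_i(1-q_i) + d_k + \sum_{i=k+1}^n q_i = D$; this equation is solvable precisely when $\sum_i q_i \leq D \leq \sum_i 2q_i(1-q_i)$, and substituting recovers the stated formulas for $S(D)$ and for the optimizers. For $D \geq \sum_i 2q_i(1-q_i)$, I would simply take $d_i \geq 2q_i(1-q_i)$ for all $i$ with $\sum_i d_i = D$ (corresponding to $\lambda = 0$), giving $p_i^* = 0$ and $S(D) = 0$. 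I expect the main obstacle to be the careful bookkeeping of the subgradient at the two kinks of $p_i^*$ and the handling of ties when several $q_i$ coincide, but the marginal-rate intuition, namely that the slope $-1/(1-2q_i)$ is most negative for the largest $q_i$ so budget should be allocated to the largest $q_i$ first, dictates the prefix/suffix structure and makes the result essentially forced.
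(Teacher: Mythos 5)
Your proposal is correct and follows the same route the paper indicates (the paper states the lemma ``can readily be proved by solving the corresponding KKT conditions'' and omits the details). Eliminating $p_i$ via $p_i^*(d_i)=\max\{0,\tfrac{2q_i(1-q_i)-d_i}{1-2q_i}\}$, then doing a scalar KKT/threshold analysis with breakpoints $1/(1-2q_i)$ ordered by the sorting $q_1\geq\cdots\geq q_n$, is precisely the clean way to carry out that KKT solve and reproduces the stated prefix/transition/suffix optimizers and the two regimes of $S(D)$.
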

Define a partition of $\mathbb{R}^2_{+}$ as follows:
\begin{align}
    \mathcal{A} &\triangleq \left\{(D,P) : 0\leq D < \sum \limits_{i}q_{i},\  T(D) \leq P \right\},\\
    \mathcal{B} &\triangleq \left\{(D,P) :  \sum \limits_{i}q_{i}\leq D,\  S(D) \leq P \right\},\\
    \mathcal{C} &\triangleq \mathbb{R}^2_{+} \setminus \left(\mathcal{A} \cup \mathcal{C} \right),
\end{align}
which is illustrated in Fig.~\ref{fig:D_P}. 

 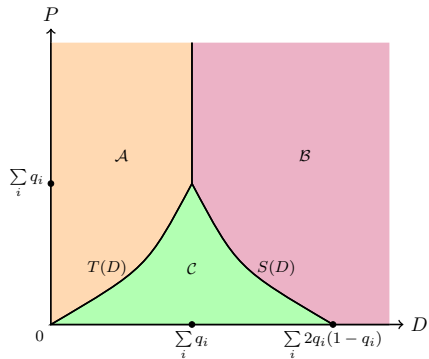
\begin{figure}[h]
\centering
\resizebox{0.75\width}{!}{\begin{tikzpicture}[scale=10]

    \draw[thick, ->] (0,0) -- (0.625,0) node[right] {$D$};
    \draw[thick, ->] (0,0) -- (0,0.525) node[above] {$P$};
    
    \fill[purple!30] (0.25,0.5) -- (0.25,0.25) .. controls (0.33,0.1) .. (0.5,0) -- (0.6,0) -- (0.6,0.5)-- cycle;
    \fill[green!30] (0,0) .. controls (0.17,0.1) .. (0.25,0.25).. controls (0.33,0.1) .. (0.5,0)--cycle;
    
    \fill[orange!30] (0,0.5)--(0,0) .. controls (0.17,0.1) .. (0.25,0.25)--(0.25,0.5)--cycle;

   \node at (0.45,0.3) {\footnotesize$\mathcal{B}$};
   \node at (0.125,0.3) {\footnotesize$\mathcal{A}$};
\node at (0.25,0.1) {\footnotesize$\mathcal{C}$};
\node at (0.1,0.1) {\footnotesize$T(D)$};
\node at (0.4,0.1) {\footnotesize$S(D)$};

\draw [fill] (0.25,0) circle [radius=.005] node [below] {\footnotesize$\sum \limits_{i}q_{i}$};
\draw [fill] (0,0.25) circle [radius=.005] node [left] {\footnotesize$\sum \limits_{i}q_{i}$};
\draw [fill] (0.5,0) circle [radius=.005] node [below] {\footnotesize$\sum \limits_{i}2q_{i}(1-q_{i})$};
\node at (-0.02,-0.02){\footnotesize$0$};

    \draw[thick] (0.25,0.5) -- (0.25,0.25) .. controls (0.33,0.1) .. (0.5,0) -- (0.6,0);
    \draw[thick] (0,0) .. controls (0.17,0.1) .. (0.25,0.25).. controls (0.33,0.1) .. (0.5,0)--cycle;    
    \draw[thick] (0,0.5)--(0,0) .. controls (0.17,0.1) .. (0.25,0.25)--(0.25,0.5);
   
\end{tikzpicture}}
\caption{A partition of $\mathbb{R}^2_{+}:=\{(D,P): D\geq 0, P\geq 0\}$.}
\label{fig:D_P}
 \end{figure}

The following theorem characterizes the RDP function of a Bernoulli vector source.
 \begin{theorem}\label{thm:closed_form}
For the RDP function of a Bernoulli vector source with parameters $q_i \leq 1/2$, $i \in [n]$, there exists optimizers $(d_i^*,p_i^*)$ satisfying the constraints with equality, i.e., $\sum_{i} d^*_{i}=D$ and $\sum_{i} p^*_{i}=P$. Moreover, the RDP function can be expressed as below. 
     \begin{enumerate}
         \item If $(D,P) \in \mathcal{A}$, then 
         \begin{align}
             R(D,P)= \sum_{i}\left[h_2(q_{i}) - h_2(d^*_{i})\right].
         \end{align}
         where $d^*_{i}=\min\left\{\frac{1}{e^{\beta}+1}, q_{i}\right\}$ with $\beta$ chosen such that $\sum_{i} d^*_{i}=D$ if $D>0$, and $d^*_i=0$ if $D=0$. In addition, 
         $p^*_{i}$'s can be chosen in any way that satisfy the constraints $p^*_{i}\geq \frac{2d^*_{i}(1-q_{i})}{1-2d^*_{i}}$ and $\sum_{i} p^*_{i}=P$.
         \item If $(D,P) \in \mathcal{B}$, then 
         $$R(D,P)=0,$$
         where the optimizers  $(d_i^*,p_i^*)$'s can be chosen in any way that satisfy the constraints  $q_{i}\leq d^*_{i},\  0\leq p^*_{i},\ 2q_{i}(1-q_{i})-(1-2q_{i})p^*_{i}\leq d^*_{i},$ $\sum_{i} d^*_{i}=D$ and 
         $\sum_{i} p^*_{i}=P$.
         \item If $(D,P) \in \mathcal{C}$, then 
         \begin{align}
             &R(D,P)\nonumber\\
             &\mkern 15mu= \sum_{i}\left[2h_2(q_{i}) + h_2(q_{i}-p^*_{i}) - h_3\left(\frac{d^*_{i}-p^*_{i}}{2}, q_{i}\right)  \right. \nonumber\\
             &\left. \mkern 162mu- h_3\left(\frac{d^*_{i}+p^*_{i}}{2}, 1-q_{i}\right)\right],
         \end{align}
         where   $(d^*_{i}, p^*_{i})$'s are given as follows. Set $d^*_{i}= d'_{i} \mathds{1}\{p'_{i}> 0\} + d''_i \mathds{1}\{p'_{i}\leq 0\}$ and $p^*_{i}= p'_{i} \mathds{1}\{p'_{i}> 0\}$, where $(d'_{i}, p'_{i})$ solves
         \begin{align}
        \alpha&=-\frac{1}{2}\log \frac{(d'_{i}-p'_{i})(d'_{i}+p'_{i})}{\left(2(1-q_{i})-(d'_{i}-p'_{i})\right)\left(2q_{i}-(d'_{i}+p'_{i})\right)}\\
        \beta&=-\frac{1}{2}\log \frac{(q_{i}-p'_{i})^2(d'_{i}+p'_{i})\left(2(1-q_{i})-(d'_{i}-p'_{i})\right)}{(1-q_{i}+p'_{i})^2(d'_{i}-p'_{i})\left(2q_{i}-(d'_{i}+p'_{i})\right)}
    \end{align} and 
     $d''_i$ solves 
      \begin{align}\label{eq:p_zero}
        d''_{i} =             \frac{\sqrt{1+4q_{i}(1-q_{i})\left(e^{2\alpha}-1\right)}-1}{e^{2\alpha}-1} 
    \end{align}
    with $\alpha>0$ and $\beta>0$ chosen such that $\sum_{i} d^*_{i}=D$ and $\sum_{i} p^*_{i}=P$.
     \end{enumerate}
 \end{theorem}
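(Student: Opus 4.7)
My plan is to start from the component-wise optimization given by Theorem~\ref{thm:ber:vec}, namely
\[
R(D,P)=\min\Bigl\{\sum_i R(d_i,p_i,q_i): d_i,p_i\ge 0,\ \sum_i d_i\le D,\ \sum_i p_i\le P\Bigr\},
\]
and treat it by KKT conditions. Since $R(d,p,q)$ is convex and non-increasing in $(d,p)$ (as stated after \eqref{rdpber2}), the program is convex and both sum-constraints must bind at an optimum, which already gives the first claim $\sum_i d_i^*=D$, $\sum_i p_i^*=P$. I would introduce multipliers $\beta\ge 0$ on the distortion equality, $\alpha\ge 0$ on the perception equality, together with non-negativity multipliers on $d_i,p_i\ge 0$. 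The key observation is that $R(d_i,p_i,q_i)$ is piecewise defined over three pieces (distortion-only \eqref{rdpber1}-line 1, zero-rate \eqref{rdpber1}-line 2, and Region~3), and the three pieces of the partition $\mathcal{A},\mathcal{B},\mathcal{C}$ correspond exactly to which of these pieces each optimizer $(d_i^*,p_i^*)$ inhabits.

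For Case~$\mathcal{A}$, I would look for a candidate where every component sits in the distortion-only piece, so $R(d_i,p_i,q_i)=h_2(q_i)-h_2(d_i)$ is independent of $p_i$; complementary slackness in $p_i$ then forces $\alpha=0$ and the $d$-stationarity $h_2'(d_i)=-\beta$ yields the waterfilling $d_i^*=\min\{(1+e^{\beta})^{-1},q_i\}$. Feasibility requires each $p_i^*$ to be at least the Region-1 threshold $2d_i^*(1-q_i)/(1-2d_i^*)$; summing, such an allocation exists precisely when $P\ge T(D)$, which is the defining condition of $\mathcal{A}$. For Case~$\mathcal{B}$, the trivial lower bound $R\ge 0$ is matched iff every component can be driven into the zero-rate piece; the minimum perception budget that allows this, for given total distortion $D\ge\sum_i q_i$, is by definition $S(D)$, and Lemma~\ref{lem:partition_D_P} produces an explicit optimal allocation. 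Thus $\mathcal{B}=\{P\ge S(D)\}$ is exactly the zero-rate region, and any $(d_i^*,p_i^*)$ satisfying the listed constraints is optimal.

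The main obstacle is Case~$\mathcal{C}$, where both $\alpha,\beta>0$ and the Region~3 formula governs. I would differentiate the ternary-entropy expression from the third line of \eqref{rdpber1} to obtain
\[
-\frac{\partial R_i}{\partial d_i}=\beta,\qquad -\frac{\partial R_i}{\partial p_i}=\alpha,
\]
whose simplification (collecting the $\log$-terms of the $h_3$'s) yields exactly the two implicit equations for $(d_i',p_i')$ stated in the theorem. Some components will however hit the boundary $p_i=0$: these are precisely the indices for which the interior stationary point has $p_i'\le 0$. For such $i$, complementary slackness removes the perception equation, the remaining scalar $d$-stationarity becomes a quadratic whose positive root gives $d_i''$ in \eqref{eq:p_zero}, and the Lagrange multiplier on $p_i\ge 0$ absorbs the surplus. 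The pair $(\alpha,\beta)$ is then uniquely pinned by the two global equalities $\sum_i d_i^*=D$ and $\sum_i p_i^*=P$. The delicate bookkeeping here is twofold: first, verifying that the algebraic simplification of $\partial R_i/\partial d_i$ and $\partial R_i/\partial p_i$ matches the stated $\alpha$- and $\beta$-equations; second, checking the boundaries, namely that as $(D,P)$ approaches $T(D)$ from $\mathcal{C}$ the $p_i'$'s grow until the Region-1 threshold is hit and the Case~$\mathcal{A}$ solution is recovered, while as $(D,P)$ approaches $S(D)$ the stationary rate drops to zero and the Case~$\mathcal{B}$ solution is recovered. Together with convexity of the program, this gluing argument confirms that the proposed KKT candidate is globally optimal throughout $\mathcal{C}$.
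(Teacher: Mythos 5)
Your proposal follows the same Lagrangian/KKT route as the paper (Theorem~\ref{thm:ber:vec}, equality of constraints by monotonicity, waterfilling multipliers, complementary slackness at $p_i=0$, and the $(D,P)$ trichotomy $\mathcal{A}/\mathcal{B}/\mathcal{C}$), and your treatment of cases $\mathcal{A}$ and $\mathcal{B}$ is essentially correct, if terse.

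The gap is in case $\mathcal{C}$. You assert that ``the three pieces of the partition $\mathcal{A},\mathcal{B},\mathcal{C}$ correspond exactly to which of these pieces each optimizer $(d_i^*,p_i^*)$ inhabits,'' i.e.\ that for a fixed $(D,P)$ all $n$ components sit in the \emph{same} branch of the piecewise $R(\cdot,\cdot,q_i)$. This is precisely what the paper proves as Lemma~\ref{lem:condition_optimizers}, and it does real work: it partitions each component's $(d_i,p_i)$ plane into $\mathcal{S}_i,\mathcal{T}_i,\mathcal{U}_i,\mathcal{V}_i$, computes the subdifferential of the Lagrangian on each piece (Table~\ref{table:subdiff}), and then argues by signs of $\nu^*$ and $\mu^*$ that a KKT point cannot mix branches across indices (e.g.\ that the right-hand sides of \eqref{eq:rhs1}--\eqref{eq:rhs2} are strictly positive on $\mathcal{U}_i$, so $\nu^*>0,\mu^*>0$, which is incompatible with stationarity of any $j$ in $\mathcal{S}_j$ or $\mathcal{T}_j$). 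Without this exclusion, your stationarity equations for case~$\mathcal{C}$ are only one candidate among many, and you have not shown that a solution to that $2n+2$ nonlinear system with $\alpha,\beta>0$ even exists. The paper sidesteps existence entirely: a minimizer exists by compactness, it must satisfy KKT, and the lemma forces every component into $\mathcal{U}_i$, which then yields the stated equations by necessity. Your ``gluing'' remark about boundary limits toward $T(D)$ and $S(D)$ addresses continuity at the edges of $\mathcal{C}$ but is not a substitute for this exclusion step.

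Two smaller omissions: you implicitly assume $d_i^*>0$ so that the nonnegativity multiplier $\gamma_i^*$ vanishes and the waterfilling formula applies, but the paper has to prove this by an $\epsilon$-perturbation argument (the chain \eqref{eq:ldiff:1}--\eqref{eq:ldiff:4}) because $h_2'(d)\to\infty$ as $d\to 0$. And $R(d,p,q)$ is not differentiable at $d=q$ (the seam $\mathcal{V}_i$), so the stationarity conditions must be phrased via subdifferentials, not gradients; your proposal treats the problem as smooth throughout.
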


 \begin{proof}
     See Section~\ref{sec:proof}.
 \end{proof}

 The above theorem says that if $(D,P) \in \mathcal{A}$, then the RDP function of a Bernoulli vector source is nothing but its rate-distortion function with the perception constraint not affecting the optimal value. The optimal solution is same as that of the rate-distortion function. On the other extreme, when $(D,P) \in \mathcal{B}$, we can have a reconstructed version that is independent of the source. The perception constraint plays a role when  $(D,P) \in \mathcal{C}$.

\begin{corollary}
    If the perception index is zero, i.e., $P=0$, then we have
    \begin{align*}
        R(D,0)= \sum_{i} \left[2h_2(q_{i}) - h_3 \left(\frac{d^*_{i}}{2}, q_{i}\right) - h_3 \left(\frac{d^*_{i}}{2}, 1-q_{i}\right)\right]^{+}
    \end{align*}
where $d_i^*$ solves \eqref{eq:p_zero}
 for $D <\sum_{i} 2q_{i}(1-q_{i})$ with $\alpha>0$ chosen such that $\sum_{i}  d^{*}_{i} = D$, and for $ \sum_{i} 2q_{i}(1-q_{i}) \leq D$, $d_i^*$'s are chosen such that 
$ 2q_{i}(1-q_{i})\leq D$ and $\sum_{i}  d^{*}_{i} = D$.
\end{corollary}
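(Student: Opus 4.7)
The plan is to specialise Theorem~\ref{thm:closed_form} to $P=0$, locate $(D,0)$ in the partition $\{\mathcal{A},\mathcal{B},\mathcal{C}\}$ as a function of $D$, and simplify the resulting per-component formulas.

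First, the perception constraint $\sum_i p_i\le 0$ combined with $p_i\ge 0$ immediately forces $p_i^{*}=0$ for every $i$, so the optimisation collapses to one over the $d_i^{*}$'s alone. Second, I locate $(D,0)$ in the partition. Since $T(D)>0$ whenever $D>0$ (each summand $2d_i(1-q_i)/(1-2d_i)$ is strictly positive once the corresponding $d_i>0$, and some $d_i$ must be positive to attain $\sum_i d_i=D>0$), only $(0,0)$ itself can sit in $\mathcal{A}$. Moreover, by Lemma~\ref{lem:partition_D_P}, $S(D)=0$ precisely when $D\ge\sum_i 2q_i(1-q_i)$, so these pairs belong to $\mathcal{B}$, giving $R(D,0)=0$ directly from Theorem~\ref{thm:closed_form}(2). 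The intermediate range $0<D<\sum_i 2q_i(1-q_i)$ must therefore lie in $\mathcal{C}$.

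For $(D,0)\in\mathcal{C}$ I invoke part~(3) of Theorem~\ref{thm:closed_form}. Since that prescription sets $p_i^{*}=p_i'\,\mathds{1}\{p_i'>0\}$ and we already know $p_i^{*}=0$, the relevant branch is $p_i'\le 0$, in which case $d_i^{*}=d_i''$ satisfying~\eqref{eq:p_zero}. The coupled KKT system collapses to the single parameter $\alpha>0$, tuned so that $\sum_i d_i^{*}=D$ (the $\beta$-equation is vacuous because the perception constraint is slack or binding automatically). Substituting $p_i^{*}=0$ into the per-component expression of part~(3) produces the summand stated in the corollary.

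Finally, the $[\,\cdot\,]^{+}$ operator unifies the three regimes. For $(D,0)\in\mathcal{C}$ each summand coincides with the single-letter RDP value $R(d_i^{*},0,q_i)$ and is therefore non-negative by definition, so the cap is inactive. For $(D,0)\in\mathcal{B}$ any allocation with $d_i^{*}\ge 2q_i(1-q_i)$ and $\sum_i d_i^{*}=D$ is feasible (which exists because $\sum_i 2q_i(1-q_i)\le D$); here each summand is non-positive and the cap returns $0$, consistent with $R(D,0)=0$. The main obstacle, though essentially a bookkeeping check, is verifying this non-positivity across the boundary: one uses that the ``otherwise'' branch of~\eqref{rdpber1} vanishes at $D=2q(1-q)$ (the matching point with the $R\equiv 0$ regime) and is non-increasing thereafter by the convexity and monotonicity of $R(d,p,q)$ remarked after~\eqref{rdpber2}, so truncating by $[\,\cdot\,]^{+}$ is precisely what is needed to glue the $\mathcal{C}$-formula to the $\mathcal{B}$-formula.
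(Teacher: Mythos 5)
Your overall strategy matches the paper's: both specialise Theorem~\ref{thm:closed_form} to $P=0$, note that $p_i^*=0$ is forced, place $(D,0)$ in $\mathcal{B}$ or $\mathcal{C}$ (and handle $D=0$ separately), and read off the per-component formula. The partition argument you give (only $(0,0)\in\mathcal{A}$ because $T(D)>0$ once $D>0$; $(D,0)\in\mathcal{B}\Leftrightarrow D\ge\sum_i 2q_i(1-q_i)$ via Lemma~\ref{lem:partition_D_P}; the remaining range lies in $\mathcal{C}$) is correct and is exactly what the paper leaves implicit.

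However, your final ``gluing'' paragraph contains an internal inconsistency. Actually substituting $p_i^*=0$ into part~(3) of Theorem~\ref{thm:closed_form} gives the per-component term
\[
2h_2(q_i)+h_2(q_i-0)-h_3\!\left(\tfrac{d_i^*}{2},q_i\right)-h_3\!\left(\tfrac{d_i^*}{2},1-q_i\right)=3h_2(q_i)-h_3\!\left(\tfrac{d_i^*}{2},q_i\right)-h_3\!\left(\tfrac{d_i^*}{2},1-q_i\right),
\]
with coefficient $3h_2(q_i)$, not the $2h_2(q_i)$ appearing in the corollary as printed. You simultaneously assert, in $\mathcal{C}$, that ``each summand coincides with the single-letter RDP value $R(d_i^*,0,q_i)$ and is therefore non-negative'' (which requires the $3h_2$ coefficient), and, in $\mathcal{B}$, that ``each summand is non-positive'' so that $[\,\cdot\,]^+$ zeroes it out (which requires the $2h_2$ coefficient, since $3h_2(q_i)-h_3(d/2,q_i)-h_3(d/2,1-q_i)\ge 0$ for all admissible $d$, with equality only at $d=2q_i(1-q_i)$). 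These two claims cannot both hold for a single expression: with $2h_2$ the summand is $\le 0$ everywhere (and the corollary would wrongly give $R(D,0)\equiv 0$, including $R(0,0)=0$ rather than $\sum_i h_2(q_i)$), while with $3h_2$ it is $\ge 0$ everywhere and in $\mathcal{B}$ would give a \emph{positive} sum, not $0$. The honest resolution is that the formula $3h_2(q_i)-h_3(\cdot)-h_3(\cdot)$ is the single-letter RDP only on the $\mathcal{U}_i$ branch, i.e.\ for $d_i^*<2q_i(1-q_i)$; on $\mathcal{B}$ each component sits in $\mathcal{T}_i$ where $R(d_i^*,0,q_i)=0$ outright, and $R(D,0)=0$ should be stated directly rather than obtained by truncating the $\mathcal{U}_i$ formula. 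Your attempt to make $[\,\cdot\,]^+$ do that work is where the argument breaks.
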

\begin{proof}
When $P=0$, $(D,P)$ can be either in $\mathcal{B}$ or $\mathcal{C}$. In both of these cases, $p_i^*=0$. Therefore, by applying Theorem~\ref{thm:closed_form}, the corollary follows.
\end{proof}

\section{Proof of Theorem~\ref{thm:closed_form}}\label{sec:proof}
Suppose $D=0$. Then, we necessarily have to set $d_i^*=0$ because the constraints in the optimization problem are $\sum_id_i \leq D$ and $d_i\geq 0, i \in [n]$.  This yields $R(0,P)= \sum_{i}h_2(q_i)$. Here $p_i^*$'s can be chosen in any way such that $\sum_{i}p^*_i=P$ and $p^*_i\geq 0$ for $i \in [n]$. For the rest of the proof, we assume that $D>0$.

Let us first argue that there exist optimizers $(d^*_{i},p^*_{i})$ satisfying $\sum_{i}d^*_{i}= D$ and $\sum_{i}p^*_{i}= P$. Suppose that one of the constraints is strict, i.e., $\sum_{i}d^*_{i}< D$ or $\sum_{i}p^*_{i}< P$. If $\sum_{i}d^*_{i}< D$ then we can increase the value of each $d^*_{i}$ to get $\tilde{d}_i\triangleq d^*_{i}+\epsilon$, where $\epsilon$ is chosen small enough that the resulting numbers satisfy the  constraint $\sum_{i}\tilde{d}_i= D$. As $\sum_{i}R(d_{i}, p_{i}, q_{i})$ is a non-increasing function in the arguments $d_{i}$'s and $p_{i}$'s, we have that 
    $R(D,P)=\sum_{i}R(d^*_{i}, p^*_{i}, q_{i})\geq \sum_{i}R(\tilde{d}_i, p^*_{i}, q_{i})\geq R(D,P)$, implying that there exist optimizers satisfying the equality $\sum_{i}d^*_{i}= D$. A similar argument can be used for $\sum_{i}p^*_{i}= P$.

With these simplifications, we will solve the optimization problem.  The corresponding Lagrangian function is 
\begin{align}\label{eq:subdiff}
    L= \sum_{i}R(d_{i},p_{i},q_{i}) +  \nu&\left(\sum_{i}d_{i} - D\right)+\mu\left(\sum_{i}p_{i} - P\right)\nonumber\\
    &-\sum_{i}\gamma_{i}d_i-\sum_{i}\lambda_ip_{i},
\end{align}
where $d_{i}\geq 0, p_{i}\geq 0,\gamma_i\geq 0,\lambda_i\geq 0$ for all $i \in [n]$ and $\nu, \mu \in \mathbb{R}$. As the optimization problem is convex and satisfies Slater's condition \cite{Boyd_Vandenberghe_2004}, strong duality holds, i.e., the optimal values of the primal and the dual problems are the same. It also means that the optimizers of the primal problem $d^*_{i}, p^*_{i}$ minimizes the Lagrangian function \eqref{eq:subdiff} evaluated at the optimizers of the dual problem $\gamma^*_i,\lambda^*_i, \nu^*, \mu^*$ for $i \in [n]$. Moreover, the optimizers satisfy the complementary slackness condition: for $i \in [n]$, $\gamma^*_i d^{*}_i= 0$ and $\lambda^*_i p^{*}_i= 0$.

Now we will argue that $d_i^*>0$ for every $i$. Suppose that there is an index $u$ for which $d_u^*=0$. As $D>0$, there must be another index $v \neq u$ such that $d_v^*>0$.  Because of the complementary slackness condition we have $\gamma_v^*=0$. For a small enough $\epsilon>0$, consider different distortion values $\tilde{d}_u=\epsilon$, $\tilde{d}_v=d_v^*-\epsilon$, and $\tilde{d}_j=d_j^*$ for $j \notin \{u,v\}$. These distortion values are still feasible as $\sum_i \tilde{d}_i=D$. Since $d^*$ minimizes the Lagrangian function, we have $L(d^*)\leq L(\tilde{d})$, where the optimizers $p_i^*, \gamma^*_i,\lambda^*_i, \nu^*, \mu^*$ are the same on both the sides of the inequality. On the other hand, 
\begin{align}
    &L(d^*)- L(\tilde{d})\nonumber\\
    &= R(d_u^*,p_u^*,q_u)+R(d_v^*,p_v^*,q_v)-\gamma^*_ud_u^*\nonumber\\
    &\mkern 40 mu -R(\tilde{d}_u,p_u^*,q_u)-R(\tilde{d}_v,p_v^*,q_v)+\gamma^*_u\tilde{d}_u\label{eq:ldiff:1}\\
    & = R(0,p_u^*,q_u)-R(\epsilon,p_u^*,q_u) +R(d_v^*,p_v^*,q_v)\nonumber\\
    &\mkern 40 mu -R(d_v^*-\epsilon,p_v^*,q_v)+\gamma^*_u\epsilon\nonumber\\
    & \geq h_2(q_u)- \left[3h_2(q_u) -h_3\left(\frac{\epsilon}{2}, q_u\right)  - h_3\left(\frac{\epsilon}{2}, 1-q_u\right)\right]\nonumber\\
    &\mkern 40 mu +R(d_v^*,p_v^*,q_v) -R(d_v^*-\epsilon,p_v^*,q_v)+\gamma^*_u\epsilon\label{eq:ldiff:2}\\
    & \geq -2h_2(q_u)+ h_3\left(\frac{\epsilon}{2}, q_u\right)  + h_3\left(\frac{\epsilon}{2}, 1-q_u\right) - O(\epsilon)\label{eq:ldiff:3}\\
    &>0,\label{eq:ldiff:4}
\end{align}
where \eqref{eq:ldiff:1} uses the complementary slackness condition $\gamma_v^*=0$; in \eqref{eq:ldiff:2}, we use the monotonicity of the RDP function of a Bernoulli source, i.e, $R(\epsilon,p_u^*,q_u) \leq R(\epsilon,0,q_u)$  and its definition; in \eqref{eq:ldiff:3}, we use the inequality $\gamma^*_u\epsilon\geq 0$ and the fact that when $d_v^*>0$, $R(d_v^*,p_v^*,q_v) -R(d_v^*-\epsilon,p_v^*,q_v) \geq \frac{\epsilon}{d_v^*}[R(d_v^*,p_v^*,q_v) -R(0,p_v^*,q_v)]  = -\ O(\epsilon)$ as $\epsilon \to 0$ because of the convexity and monotonicity of the RDP function of a Bernoulli source; and the last inequality \eqref{eq:ldiff:4} follows from the fact that $-2h_2(q_u)+ h_3\left(\frac{\epsilon}{2}, q_u\right)  + h_3\left(\frac{\epsilon}{2}, 1-q_u\right)=\omega(\epsilon)$ as $\epsilon \to 0$. As $L(d^*) > L(\tilde{d})$ contradicts that optimality of $d^*$, we must have $d_i^*>0$ for every $i$.

This along with the complementary slackness condition yields $\gamma_i^*=0$ for every $i$. Therefore, we can omit the term $\sum_{i}\gamma_{i}d_i$ in the Lagrangian function \eqref{eq:subdiff} and consider the function when $d_i>0$ for $i\in [n]$. Let us consider the corresponding KKT conditions, which are given by
 \begin{align}
 &\mathbf{0} \in \partial L|_{d_{i}= d^{*}_i, p_{i}=p
^{*}_i}, \quad i \in [n] \label{eq:lag_der_2}\\
       &d^{*}_i > 0, \quad p^*_{i} \geq 0,  \quad \lambda^*_i \geq 0, \quad i \in [n], \label{eq:d_p_l_2} \\   
       &\sum_{i}d^*_{i} =D,\label{eq:sum_d_2}\\
       &\sum_{i}p^*_{i} = P,\label{eq:sum_p_2}\\
&\lambda^*_i p^{*}_i= 0, \quad i \in [n], \label{eq:lp_2}\\
       &\nu^*, \mu^* \in \mathbb{R},\label{eq:nu_mu_2}
         \end{align}
where $\partial L$ is the subdifferential\footnote{The subdifferential $\partial f$ of a convex function $f(x)$ at $x=x_0$ is defined as $$\partial f|_{x=x_0}=\left\{v \mid f(x_0)+ \langle v, y-x_0\rangle \leq f(y) \ \forall y \in \mathbf{dom}\ f \right\}.$$ We know that for a convex and differentiable $f$, $x=x_0$ is a minimizer of $f$ iff the gradient of $f$ at $x_0$ is the zero vector, i.e., $\nabla f|_{x=x_0}= \mathbf{0}$. In the case of non-differentiable $f$, the necessary and sufficient condition for $x_0$ to be the minimizer of $f$ is that $\mathbf{0} \in \partial f|_{x=x_0}$.} of the function $L$ in the variables $d_{i}$ and $p_{i}$ with the rest of the variables fixed.

As the optimization problem is convex and satisfies Slater's condition, the KKT conditions are the necessary and sufficient conditions for the optimality of the primal and dual problems. It means that the optimizers $\mu^*$, $\nu^*$, $\lambda_i^*, d^{*}_i$ and $p^{*}_i$ for $i \in [n]$ satisfy the above KKT conditions, and any $\mu^*$, $\nu^*$, $\lambda_i^*,  d^{*}_i$ and $p^{*}_i$ for $i \in [n]$ satisfying the KKT conditions are optimizers of the primal and the dual problem. To solve \eqref{eq:lag_der_2}, we will partition the set 
$d_{i}> 0, p_{i}\geq 0$ and consider the subdifferential of $L$ on each part. For each $i$, define 

\begin{align}
    \mathcal{S}_i &\triangleq \Big\{(d_{i},p_{i}) : 0< d_{i}<q_{i},\  \frac{d_{i}(1-2q_{i})}{1-2d_{i}}\leq p_{i}\Big\},\\
     \mathcal{T}_i &\triangleq \left\{(d_{i},p_{i}) : q_{i}<d_{i},\ 2q_{i}(1-q_{i})-(1-2q_{i})p_{i}\leq d_{i},\right. \nonumber\\
    &\mkern 300mu\left.\  0\leq p_{i}  
     \right\},\\
     \mathcal{U}_i &\triangleq \Big\{(d_{i},p_{i}) : \frac{p_{i}}{1-2(q_{i}-p_{i})}<d_{i},\  0\leq p_{i}<q_{i},\nonumber\\
     & \mkern 130mud_i<2q_{i}(1-q_{i})-(1-2q_{i})p_{i}\Big\},\\
    \mathcal{V}_i &\triangleq \left\{(d_{i},p_{i}) : d_{i}=q_{i},\ q_{i}\leq p_{i} \right\},
\end{align}
which is illustrated in Fig.~\ref{fig:pi}. The subdifferential of the Lagrangian function \eqref{eq:subdiff} is given in Table~\ref{table:subdiff}. 
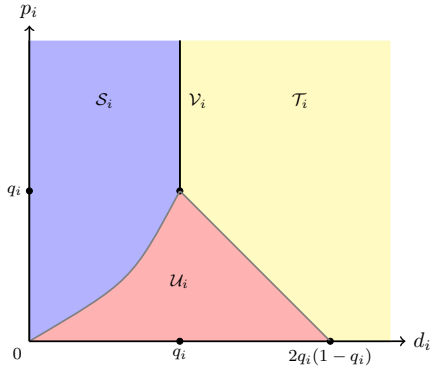
\begin{figure}[h]
\centering
\resizebox{0.8\width}{!}{\begin{tikzpicture}[scale=10]

    \draw[thick, ->] (0,0) -- (0.625,0) node[right] {$d_{i}$};
    \draw[thick, ->] (0,0) -- (0,0.525) node[above] {$p_{i}$};
    
    \fill[yellow!30] (0.25,0.5) -- (0.25,0.25) -- (0.5,0) -- (0.6,0) -- (0.6,0.5)-- cycle;
    \fill[red!30] (0,0) .. controls (0.17,0.1) .. (0.25,0.25)--(0.5,0)--cycle;
    
    \fill[blue!30] (0,0.5)--(0,0) .. controls (0.17,0.1) .. (0.25,0.25)--(0.25,0.5)--cycle;

   \node at (0.45,0.4) {\footnotesize$\mathcal{T}_{i}$};
   \node at (0.125,0.4) {\footnotesize$\mathcal{S}_{i}$};
\node at (0.25,0.1) {\footnotesize$\mathcal{U}_{i}$};

\draw [fill] (0.25,0) circle [radius=.005] node [below] {\footnotesize$q_{i}$};
\draw [fill] (0,0.25) circle [radius=.005] node [left] {\footnotesize$q_{i}$};
\draw [fill] (0.5,0) circle [radius=.005] node [below] {\footnotesize$2q_{i}(1-q_{i})$};
\draw [fill] (0.25,0.25) circle [radius=.005];
\node at (0.28,0.4) {\footnotesize$\mathcal{V}_{i}$};
\node at (-0.02,-0.02){\footnotesize$0$};

    \draw[thick] (0.25,0.5) -- (0.25,0.25);
    \draw[thick, gray] (0.25,0.25)-- (0.5,0);
    \draw[thick] (0.5,0) -- (0.6,0);
    \draw[thick, gray] (0,0) .. controls (0.17,0.1) .. (0.25,0.25);    
    \draw[thick]  (0.5,0)--(0,0);    
    \draw[thick] (0,0.5)--(0,0);
    \draw[thick] (0.25,0.25)--(0.25,0.5);
   
\end{tikzpicture}}
\caption{A partition of $d_{i}\geq 0$ and $p_{i}\geq 0$}.
\label{fig:pi}
 \end{figure}

\begin{table*}[!h]
\begin{center}
\caption{The subdifferential of the Lagrangian function at $(d^*_{i}, p^*_{i})$ satisfying the conditions \eqref{eq:d_p_l_2} -- \eqref{eq:nu_mu_2}}
\label{table:subdiff}
\begin{tabular}{ |c|c| } 
  \hline
   $(d^*_{i}, p^*_{i})$ & $\partial L|_{d_{i}=d^*_{i}, p_{i}=p^*_{i}}$\\ 
   \hline   $\mathcal{S}_i$&$\left\{\left(\log \frac{d^*_{i}}{1-d^*_{i}}+\nu^*,\  \mu ^*\right)\right\}$\\
   \hline
    $\mathcal{T}_i$ & $\left\{(\nu^*,\  \mu^*-\lambda^*_i)\right\}$\\
    \hline
    $\mathcal{U}_i$ &$\left\{\left(\frac{1}{2}\log \frac{(d^*_{i}-p^*_{i})(d^*_{i}+p^*_{i})}{\left(2(1-q_{i})-(d^*_{i}-p^*_{i})\right)\left(2q_{i}-(d^*_{i}+p^*_{i})\right)}+\nu^*,\  \frac{1}{2}\log \frac{(q_{i}-p^*_{i})^2(d^*_{i}+p^*_{i})\left(2(1-q_{i})-(d^*_{i}-p^*_{i})\right)}{(1-q_{i}+p^*_{i})^2(d^*_{i}-p^*_{i})\left(2q_{i}-(d^*_{i}+p^*_{i})\right)}+\mu^*-\lambda^*_
    {i}\right)\right\}$\\
    \hline
    $\mathcal{V}_i$ & $\operatorname{conv}\left(\left\{\left(\log \frac{q_{i}}{1-q_{i}}+\nu^*, \ \mu^*\right), \  (\nu^*,\  \mu^*)\right\}\right)$\\
    \hline
\end{tabular}
\end{center}
\end{table*}

\begin{lemma}\label{lem:condition_optimizers}
    Let $(d^*_{i}, p^*_{i})$,  $\lambda^*_i$,  $\nu^*$, and $\mu^*$, $i\in [n]$ be the optimizers that satisfy the KKT conditions \eqref{eq:lag_der_2} -- \eqref{eq:nu_mu_2}. Then, only one of the following statements hold.
    \begin{enumerate}
        \item Every pair $(d^*_{i}, p^*_{i})$ must be in the region $\mathcal{S}_i$ or in $\mathcal{V}_i$, i.e.,        
        $$(d^*_{i}, p^*_{i}) \in \mathcal{S}_i \cup \mathcal{V}_i \text{ for } i \in [n].$$
        \item Every pair $(d^*_{i}, p^*_{i})$ must be in the region $\mathcal{T}_i$ or in $\mathcal{V}_i$, i.e.,        
        $$(d^*_{i}, p^*_{i}) \in \mathcal{T}_i \cup \mathcal{V}_i \text{ for } i \in [n].$$
        \item Every pair $(d^*_{i}, p^*_{i})$ must be in the region $\mathcal{U}_i$, i.e.,         
        $$(d^*_{i}, p^*_{i}) \in \mathcal{U}_i \text{ for } i \in [n].$$
    \end{enumerate}
\end{lemma}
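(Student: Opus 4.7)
The plan is to combine the per-component KKT stationarity $\mathbf{0} \in \partial L|_{(d^*_i, p^*_i)}$, expressed via Table~\ref{table:subdiff}, with the fact that the multipliers $\nu^*$ and $\mu^*$ are shared across all indices $i$. Each region forces a specific sign pattern on $(\nu^*, \mu^*)$, and only three such patterns can simultaneously hold across all components.

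I would first read off the sign constraints region by region, using the table together with the complementary slackness $\lambda^*_i p^*_i = 0$. In $\mathcal{S}_i$: since $d^*_i < q_i \leq 1/2$ we have $\nu^* = \log \tfrac{1-d^*_i}{d^*_i} > 0$, and $p^*_i > 0$ (strict whenever $q_i < 1/2$) forces $\lambda^*_i = 0$, hence $\mu^* = 0$. In $\mathcal{T}_i$: the first coordinate gives $\nu^* = 0$ and the second gives $\mu^* = \lambda^*_i \geq 0$. In $\mathcal{V}_i$: requiring $(0,0)$ to lie in the convex hull of the two endpoints forces $\mu^* = 0$ and pins $\nu^*$ to the interval $[0, \log \tfrac{1-q_i}{q_i}]$. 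In $\mathcal{U}_i$: the subdifferential is a singleton whose two coordinates, equated to zero, express $\nu^*$ and $\mu^* - \lambda^*_i$ as $-\tfrac{1}{2}\log B$ and $-\tfrac{1}{2}\log A$ for the two ratios appearing in the table.

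The main obstacle is the strict inequalities $\nu^* > 0$ and $\mu^* > 0$ in the interior of $\mathcal{U}_i$. For $B$, a direct rearrangement shows that $B = 1$ is equivalent to $d_i + (1-2q_i)p_i = 2q_i(1-q_i)$, i.e., the $\mathcal{U}_i/\mathcal{T}_i$ boundary, and the defining inequality of $\mathcal{U}_i$ flips this into $B < 1$, giving $\nu^* > 0$. The ratio $A$ is more delicate: substitution shows $A = 1$ on both the $\mathcal{S}_i/\mathcal{U}_i$ and $\mathcal{U}_i/\mathcal{T}_i$ boundaries, so $A < 1$ strictly inside requires an extra argument. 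Here I would appeal to the strict monotonicity of the Bernoulli RDP function $R(d,p,q)$ in $p$ throughout the interior of $\mathcal{U}_i$, where the perception constraint is active; since the stationarity equations identify $-\tfrac{1}{2}\log A$ with $\mu^* - \lambda^*_i$ and $-\partial R/\partial p > 0$ on this region, we obtain $\mu^* > \lambda^*_i \geq 0$ in both the $p^*_i > 0$ and $p^*_i = 0$ cases.

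The final step is the global consistency argument. If some index lies in $\mathcal{U}_j$, then $\mu^* > 0$ excludes $\mathcal{S}_i$ and $\mathcal{V}_i$ (both of which demand $\mu^* = 0$) and $\nu^* > 0$ excludes $\mathcal{T}_i$, so every component must lie in $\mathcal{U}_i$, giving case 3. Otherwise, if some index lies in $\mathcal{S}_j$, then $\mu^* = 0$ excludes $\mathcal{U}$ and $\nu^* > 0$ excludes $\mathcal{T}$, leaving every component in $\mathcal{S}_i \cup \mathcal{V}_i$, which is case 1. The only remaining possibility places every component in $\mathcal{T}_i \cup \mathcal{V}_i$, yielding case 2, and no two of these three patterns can coexist by the sign analysis above.
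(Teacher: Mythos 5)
Your sign-pattern bookkeeping per region, the use of shared multipliers $(\nu^*,\mu^*)$ to force global consistency, and the derivation of $B<1$ in $\mathcal{U}_i$ from the defining inequality $d_i < 2q_i(1-q_i)-(1-2q_i)p_i$ all match the paper's proof and are correct. You also correctly identify the only delicate point: showing that the second log-ratio $A$ is strictly less than $1$ in the interior of $\mathcal{U}_i$, since it equals $1$ on both the $\mathcal{S}_i/\mathcal{U}_i$ and $\mathcal{U}_i/\mathcal{T}_i$ boundaries.

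The resolution you propose for that step, however, is circular. You write that you would ``appeal to the strict monotonicity of the Bernoulli RDP function $R(d,p,q)$ in $p$ throughout the interior of $\mathcal{U}_i$.'' But the quantity $\tfrac{1}{2}\log A$ is exactly $\partial R/\partial p$ in the middle regime, so strict monotonicity in $p$ is \emph{literally the statement} $A<1$; nothing cited in the paper establishes it (the paper only asserts that $R(D,P,q)$ is convex and non-increasing, a non-strict fact), and the fact that the perception constraint is tight on $\mathcal{U}_i$ does not by itself preclude a vanishing multiplier. You need an independent argument. The paper supplies one: it introduces $f(d_i,p_i)\triangleq\log A$, shows that $f$ vanishes at both $d_i$-endpoints of $\mathcal{U}_i$ when $p_i>0$ (and equals $\log\frac{q_i}{1-q_i}<0$ and $0$ at the two endpoints when $p_i=0$), and then invokes strict convexity of $f$ in $d_i$ to conclude $f<0$ strictly inside the interval. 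Replacing your appeal to monotonicity with this convexity-plus-endpoint-values argument, or with a direct algebraic verification of $A<1$, would close the gap; as written, the crucial inequality is asserted rather than proved.
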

\begin{proof}
    The idea is to argue that if one of the pairs $(d^*_{i}, p^*_{i})$'s is in a certain region, which could be $\mathcal{S}_i$, $\mathcal{T}_i$, $\mathcal{U}_i$, or $\mathcal{V}_i$, then the rest of the pairs can only be in a compatible region.

    Let us start with the case of  $(d^*_{i}, p^*_{i}) \in \mathcal{U}_i$ for some $i \in [n]$. As $\mathbf{0} \in \partial L|_{d_{i}=d^*_{i}, p_{i}=p^*_{i}}$, we have 
    \begin{align}
        \nu^*&=-\frac{1}{2}\log \frac{(d^*_{i}-p^*_{i})(d^*_{i}+p^*_{i})}{\left(2(1-q_{i})-(d^*_{i}-p^*_{i})\right)\left(2q_{i}-(d^*_{i}+p^*_{i})\right)}\label{eq:rhs1}\\
        \mu^*-\lambda^*_i&=-\frac{1}{2}\log \frac{(q_{i}-p^*_{i})^2(d^*_{i}+p^*_{i})\left(2(1-q_{i})-(d^*_{i}-p^*_{i})\right)}{(1-q_{i}+p^*_{i})^2(d^*_{i}-p^*_{i})\left(2q_{i}-(d^*_{i}+p^*_{i})\right)}\label{eq:rhs2}
    \end{align}
    As the right-hand sides of \eqref{eq:rhs1} and \eqref{eq:rhs2} are strictly positive\footnote{It can be easily verified that $$\frac{(d_{i}-p_{i})(d_{i}+p_{i})}{\left(2(1-q_{i})-(d_{i}-p_{i})\right)\left(2q_{i}-(d_{i}+p_{i})\right)}<1$$ in $\mathcal{U}_i$ by using the constraint that $d_{i}<2q_{i}(1-q_{i})-(1-2q_{i})p_{i}$. In order to prove that 
    $$\log \frac{(q_{i}-p_{i})^2(d_{i}+p_{i})\left(2(1-q_{i})-(d_{i}-p_{i})\right)}{(1-q_{i}+p_{i})^2(d_{i}-p_{i})\left(2q_{i}-(d_{i}+p_{i})\right)}<0$$ in $\mathcal{U}_i$, we can use convexity of the left-hand side in terms of $d_{i}$ with $p_{i}>0$. To this end, let 
    $$f(d_{i}, p_{i})\triangleq \log \frac{(q_{i}-p_{i})^2}{(1-q_{i}+p_{i})^2}+\log \frac{d_{i}+p_{i}}{d_{i}-p_{i}}+\log \frac{2(1-q_{i})-(d_{i}-p_{i})}{2q_{i}-(d_{i}+p_{i})}.$$
    In $\mathcal{U}_i$ for $p_i>0$, $\frac{p_{i}}{1-2(q_{i}-p_{i})}<d_{i}<2q_{i}(1-q_{i})-(1-2q_{i})p_{i}$ and $f\left(\frac{p_{i}}{1-2(q_{i}-p_{i})}, p_{i}\right)=f\left(2q_{i}(1-q_{i})-(1-2q_{i})p_{i}, p_{i}\right)=0$. Similarly, for $p_i=0$, $f\left(d_{i}, 0\right)= \log\frac{q_i}{1-q_i}<0, f\left(2q_{i}(1-q_{i}), 0\right)=0$.
    As $f(d_{i}, p_{i})$ is strictly convex in the argument $d_{i}$ for any fixed $0\leq p_{i}<q_{i}$, the maximum of $f(d_{i}, p_{i})$ is achieved only at the end points. 
    Hence, we have the required non-positivity of the function.} in the region $\mathcal{U}_i$ and $\lambda^*_i\geq 0$, we have $\nu^*>0$ and $\mu^*>0$. As a result, for a different index $j$, $\mathbf{0} \notin \partial L|_{d_j=d^*_{j}, p_{j}=p^*_{j}}$ when $(d^*_{j}, p^*_{j})$ is in either $\mathcal{S}_{j}, \mathcal{T}_{j},$ or $\mathcal{V}_{j}$. Therefore, $(d^*_{j}, p^*_{j}) \in \mathcal{U}_{j}$, proving statement $3)$.

    We now turn to the case of $(d^*_{i}, p^*_{i}) \in \mathcal{S}_i$ for some $i \in [n]$. As $\mathbf{0} \in \partial L|_{d_{i}=d^*_{i}, p_{i}=p^*_{i}}$, we have
    \begin{align}
        \nu^*&=-\log \frac{d^*_{i}}{1-d^*_{i}}\label{eq:rhs3}\\
        \mu^*&=0\label{eq:rhs4}.
    \end{align}
    Since $d_i^* <q_i \leq \frac{1}{2}$ in $\mathcal{S}_i$, we have $-\log \frac{d^*_{i}}{1-d^*_{i}}>0$, which implies $\nu^*> 0$ and $\mu^*=0$. This rules out the possibility of having $(d^*_{j}, p^*_{j})$ in $\mathcal{U}_{j}$ or $\mathcal{T}_{j}$  for a different index $j$. This is because the subdifferential cannot contain $\mathbf{0}$ as $\lambda_j^*\geq 0$ and $\frac{1}{2}\log \frac{(q_{j}-p^*_{j})^2(d^*_{j}+p^*_{j})\left(2(1-q_{j})-(d^*_{j}-p^*_{j})\right)}{(1-q_{j}+p^*_{j})^2(d^*_{j}-p^*_{j})\left(2q_{j}-(d^*_{j}+p^*_{j})\right)}<0$ in $\mathcal{U}_{j}$. So, the rest of the pairs $(d^*_{j}, p^*_{j})$ must be in $\mathcal{S}_{j} \cup \mathcal{V}_{j}$. 

    Similarly, when $(d^*_{i}, p^*_{i}) \in \mathcal{T}_i$ for some $i \in [n]$, $\nu^*=0$ and $\mu^*=\lambda^*_i$. By combining this with the fact the $\log \frac{d^*_{j}}{1-d^*_{j}}<0$ in $\mathcal{S}_{j}$ and $\frac{1}{2}\log \frac{(d^*_{j}-p^*_{j})(d^*_{j}+p^*_{j})}{\left(2(1-q_{j})-(d^*_{j}-p^*_{j})\right)\left(2q_{j}-(d^*_{j}+p^*_{j})\right)}<0$ in $\mathcal{U}_{j}$, we  can conclude that $(d^*_{j}, p^*_{j})$ cannot be in $\mathcal{S}_{j}$ or $\mathcal{U}_{j}$  for a different index $j$. So it must be in $\mathcal{T}_{j} \cup \mathcal{V}_{j}$.

    Suppose if $(d^*_{i}, p^*_{i}) \in \mathcal{V}_i$ for some $i \in [n]$, then  $0 \leq \nu^* \leq -\log \frac{q_{i}}{1-q_{i}}$ and $\mu^*=0$, which implies that the rest of the pairs can only be in either $\mathcal{S}_{j}, \mathcal{T}_{j},$ or $\mathcal{V}_{j}$. By the above arguments for $\mathcal{S}$ and $\mathcal{T}$, if one of them lies in $\mathcal{S}_{j}$ (resp. $\mathcal{T}_{j}$), then the rest of them must be in the corresponding  $\mathcal{S} \cup \mathcal{V}$ (resp. $\mathcal{T} \cup \mathcal{V}$), which completes the proof of the statements $1)$ and $2)$.
\end{proof}

Let us now complete the proof of  Theorem~\ref{thm:closed_form} by finding an optimizer that satisfies the conditions in Lemma~\ref{lem:condition_optimizers} along with the KKT conditions. First we will consider the case of $(D,P)\in \mathcal{A}$.  As we assume $D>0$, we can set $$d^*_{i}=\min\left\{\frac{1}{e^{\beta}+1}, q_{i}\right\}$$
with $\beta$ chosen such that $\sum_{i} d^*_{i}=D$, and pick $p^*_{i}\geq \frac{2d^*_{i}(1-q_{i})}{1-2d^*_{i}}$ that satisfy $\sum_{i} p^*_{i}=P$. This is possible as $(D,P)\in \mathcal{A}$ where $P\geq T(D)=\sum_{i}\frac{2d^*_{i}(1-q_{i})}{1-2d^*_{i}}$. Observe that  $(d_i^*, p_i^*) \in \mathcal{S}_i \cup \mathcal{V}_i$, and they satisfy the KKT conditions \eqref{eq:lag_der_2}--\eqref{eq:nu_mu_2} with $\lambda_i^*=0$, $\nu^*=\beta$ and $\mu^*=0$. As the KKT conditions  are necessary, we have that  
 \begin{align}
             R(D,P)&=  \sum_{i}R(d_i^*, p_i^*, q_i)\\
             &=\sum_{i}\left[h_2(q_{i}) - h_2(d^*_{i})\right].
\end{align}
for $(D,P)\in \mathcal{A}$. This proves $1)$ of Theorem~\ref{thm:closed_form}.

Next, we will consider the case of $(D,P)\in \mathcal{B}$. In $\mathcal{B}$, we have that $P \geq S(D)$, which means that there exists $(d_i^*, p_i^*)$'s satisfying  $q_{i}\leq d^*_{i},\  0\leq p^*_{i},\ 2q_{i}(1-q_{i})-(1-2q_{i})p^*_{i}\leq d^*_{i},$ $\sum_{i} d^*_{i}=D$ and 
$\sum_{i} p^*_{i}=P$. In other word, $(d_i^*, p_i^*) \in \mathcal{T}_i \cup \mathcal{V}_i$ for $i \in [n]$. This immediately implies that  
$$R(D,P)= \sum_{i}R(d_i^*, p_i^*, q_i)=0$$
for $(D,P)\in \mathcal{B}$, completing the proof of $2)$ of Theorem~\ref{thm:closed_form}.

Before we proceed with the proof of the last statement of the theorem, observe that the optimizers $(d^*_i, p^*_i)\in \mathcal{S}_i \cup \mathcal{V}_i$ for $i \in [n]$ iff they can be written as
\begin{align} \label{eq:d:1}
    d^*_{i}&=\min\left\{\frac{1}{e^{\beta}+1}, q_{i}\right\}\\
p^*_{i}& \geq \frac{2d^*_{i}(1-q_{i})}{1-2d^*_{i}}\label{eq:p:1}
\end{align}
for some $\beta$ chosen such that $\sum_{i} d^*_{i}=D$ and $\sum_{i} p^*_{i}=P$. This follows from equating the subdifferential on $\mathcal{S}_i \cup \mathcal{V}_i$ to zero. Similarly, the condition that the optimizers $(d^*_i, p^*_i)\in \mathcal{T}_i \cup \mathcal{V}_i$ for $i \in [n]$ is nothing but $q_{i}\leq d^*_{i},\  0\leq p^*_{i},\ 2q_{i}(1-q_{i})-(1-2q_{i})p^*_{i}\leq d^*_{i},$ $\sum_{i} d^*_{i}=D$ and 
$\sum_{i} p^*_{i}=P$. 

Let $(D,P)\in \mathcal{C}$. This means that $P < T(D)$ and $P<S(D)$. As $P < T(D)$, it follows from the definition of $T(D)$ that it is not possible to have $d_i^*$ and $p_i^*$ of the form \eqref{eq:d:1} and \eqref{eq:p:1}, which means that $1)$ of Lemma~\eqref{lem:condition_optimizers} cannot happen. Similarly, since $P<S(D)$, it is not possible for $2)$ of Lemma~\eqref{lem:condition_optimizers} to hold. Therefore, by Lemma~\eqref{lem:condition_optimizers}, the optimizers must satisfy the third condition, which is $(d^*_{i}, p^*_{i}) \in \mathcal{U}_i \text{ for } i \in [n].$ This implies that
 \begin{align}
             R&(D,P)\nonumber\\
             &=  \sum_{i}R(d_i^*, p_i^*, q_i)\nonumber\\
             &=\sum_{i}\left[2h_2(q_{i}) + h_2(q_{i}-p^*_{i}) - h_3\left(\frac{d^*_{i}-p^*_{i}}{2}, q_{i}\right)  \right. \nonumber\\
             &\left. \mkern 162mu- h_3\left(\frac{d^*_{i}+p^*_{i}}{2}, 1-q_{i}\right)\right],
         \end{align}
where $(d^*_{i}, p^*_{i})$ satisfy 
\begin{align*}
    -\frac{1}{2}\log \frac{(d^*_{i}-p^*_{i})(d^*_{i}+p^*_{i})}{\left(2(1-q_{i})-(d^*_{i}-p^*_{i})\right)\left(2q_{i}-(d^*_{i}+p^*_{i})\right)}&=\nu^*,\\ -\frac{1}{2}\log \frac{(q_{i}-p^*_{i})^2(d^*_{i}+p^*_{i})\left(2(1-q_{i})-(d^*_{i}-p^*_{i})\right)}{(1-q_{i}+p^*_{i})^2(d^*_{i}-p^*_{i})\left(2q_{i}-(d^*_{i}+p^*_{i})\right)}&=\mu^*-\lambda^*_
    {i}
\end{align*}
for some $\nu^*, \mu^*$ and $\lambda_i^* \geq 0$ such that $\lambda^*_i p^{*}_i= 0, i \in [n]$. Therefore, a solution with $p^{*}_i>0$ has to satisfy
\begin{align*}
    -\frac{1}{2}\log \frac{(d^*_{i}-p^*_{i})(d^*_{i}+p^*_{i})}{\left(2(1-q_{i})-(d^*_{i}-p^*_{i})\right)\left(2q_{i}-(d^*_{i}+p^*_{i})\right)}&=\nu^*,\\ -\frac{1}{2}\log \frac{(q_{i}-p^*_{i})^2(d^*_{i}+p^*_{i})\left(2(1-q_{i})-(d^*_{i}-p^*_{i})\right)}{(1-q_{i}+p^*_{i})^2(d^*_{i}-p^*_{i})\left(2q_{i}-(d^*_{i}+p^*_{i})\right)}&=\mu^*
\end{align*}
and a solution with $p^{*}_i=0$ must satisfy
\begin{align*}
    -\frac{1}{2}\log \frac{(d^*_{i})^2}{\left(2(1-q_{i})-d^*_{i}\right)\left(2q_{i}-d^*_{i})\right)}&=\nu^*,
\end{align*}
which completes the proof of Theorem~\ref{thm:closed_form}.

\section{Rate-Distortion-Perception Function for Graphs}
In this section, we present the rate-distortion-perception function for graphs, motivated by the growing interest in the compression of graph sources \cite{choi_structural_entropy, Abbe2016GraphClusters, delgosha_universal_compression, mihai2021structural, bustin_lossy, martin_rate_distortion_sbm, lossy_spatial_24}. Although a graph is a visual object, it does not make much sense to talk about the notion of ``perception'' in relation to graphs, because the same graph can be drawn in many different ways that look completely different from each other. However, a constraint on the distance between the probability distributions of the original graph and its reconstructed version would still seem to be prudent. By imposing such a constraint along with the Hamming distortion constraint, we can control the essential properties of the reconstructed graph while keeping it as close as possible to the original graph. These properties may include the degree distribution, motiff distribution etc., which are often needed to be preserved for the downstream tasks on the reconstructed graph. As these properties are functions of the distribution of the graph, by controlling the distance between the distributions of the original and reconstructed graphs, we control the closeness of all the graph properties in general. Though the word ``perception" is not appropriate here, we still use it to refer to the similarity of the graph distributions.

Let $G_n=(V, E)$ be a graph with the vertex set $V=[n]$ containing $n$ vertices and the edge set $E$.  Let $P_{G_n}$ denote the probability distribution of a random $n$-vertex graph source. The reconstructed graph with $n$ vertices is denoted by $\hat{G}_n$. Given a distortion measure $\Delta$ and a function $d$ that measures how far the two distributions defined on graphs are, the   rate-distortion-perception function is given by
\begin{align}
    R(D,P)= & \min_{P_{\hat{G}_n|G_n}} I(G_n;\hat{G}_n) \\ 
    &\suchthat \ \mathbb{E}\left[\Delta(G_n;\hat{G}_n)\right]\leq D, \ d(P_{G_n}, P_{\hat{G}_n})\leq P.\nonumber
\end{align}
\subsection{Inhomogeneous Erd\H{o}s-R\'enyi (ER) model}
In an Erd\H{o}s-R\'enyi (ER) model, a graph is generated by connecting pairs of vertices randomly and independently with edges. Precisely, the distribution of an inhomogeneous ER graph is specified by 
$$P_{G_n}=\prod_{i<j}q_{ij}^{E_{ij}}(1-q_{ij})^{1-E_{ij}},$$
where $q_{ij}$ is the probability of connecting two vertices $i$ and $j$ with an edge and $E_{ij}$ is a binary random variable that specifies if an edge between the vertices $i$ and $j$ is present or not.

As there is independence across the edges, we can view an inhomogeneous ER graph as a Bernoulli vector source. Thus, for the Hamming distortion measure $\Delta(G_n;\hat{G}_n)=\sum_{i<j}\mathds{1}(E_{ij} \neq \hat{E}_{ij})$ and a discrepancy measure of the distribution of the edge random variables of the original graph and the reconstructed graph $d(P_{G_n}, P_{\hat{G}_n})=\sum_{i<j}\left|\mathbb{P}(E_{ij}=1) - \mathbb{P}(\hat{E}_{ij}=1)\right|=\sum_{i<j}\left|q_{ij} - \hat{q}_{ij}\right|$, the RDP function of an ER graph is given by   Theorem~\ref{thm:closed_form}.

\section{Conclusion}
In this work, we have characterized the rate-distortion-perception function of a Bernoulli vector source. The optimal allocation of distortion and perception levels to each component depends on which of the three  possible regions that $(D,P)$ lies in. We have also introduced the RDP function for graph sources and applied the result of Bernoulli vector sources to the Erd\H{o}s-R\'enyi model. It would be of interest to extend the current results by considering different perception measures and different graph sources.

\bibliographystyle{IEEEtran}
\bibliography{IEEEabrv,References}


\end{document}